\newtheorem{theorem}{Theorem}
\newtheorem{lemma}[theorem]{Lemma}
\newtheorem{claim}[theorem]{Claim}
\newtheorem{corollary}[theorem]{Corollary}
\theoremstyle{definition}
\newtheorem{definition}[theorem]{Definition}
\newtheorem{rmk}[theorem]{Remark}
\numberwithin{equation}{section}  
\numberwithin{theorem}{section} 
\newcommand{\ep}{\varepsilon}
\newcommand{\ro}{\varrho}
\newcommand{\brho}{\bar{\varrho}}
\newcommand{\htheta}{\hat{\theta}}
\newcommand{\R}{\mathbb{R}}
\title{Fast state transfer via loop weights}
\author{Gabor Lippner\footnote{Department of Mathematics, Northeastern University, Boston MA, g.lippner@northeastern.edu}, Yujia Shi\footnote{Department of Physics, Creighton University, Omaha NE, yujiashi@creighton.edu}}
\begin{document}
\maketitle
\begin{abstract}
    We prove that almost-linear-time high-fidelity state transfer is achievable in a quantum spin chain using loop weights at the second and second-to-last nodes. We provide specific parameter values, and using a careful analysis of the eigenvectors we make precise quantitative estimates of the transfer time and strength.
\end{abstract}

\section{Introduction}

Quantum state transfer is an important phenomenon that enables transmission of quantum information over physical distances via networks of spin particles. The simplest such network is the spin chain, represented by the path graph. The study of state transfer on it has been initiated by Bose~\cite{bose2003} and has since received a lot of attention from both the physics and mathematics communities. We refer the reader to the surveys~\cite{kay2010, QST} for further literature from both of these perspectives.

The most important features of the state transfer are its fidelity (how likely is the information transferred?) and the transfer time (how long do you have to wait for the transfer to happen?). It has been long known that without some inhomogeneity in the chain, the transfer strength cannot typically be arbitrarily close to 1. One such way of introducing inhomogeneity is to apply magnetic fields at certain positions of the chain. This has been explored extensively, for example in~\cite{christandl2005,stolze2012,casaccino2009,path}, for the case of the magnetic field being applied to the first and last particles of the chain. It turns out that one can indeed achieve $1-\ep$ fidelity state transfer using magnetic fields of strength $O(1/\ep)$, but the transfer time will become exponentially large in the length of the chain - a side-effect that makes this approach impractical. 

However, Chen et al.~\cite{chen2016} observed a rather unexpected phenomenon: high fidelity state transfer can be achieved in polynomial time by applying inhomogeneities near, but not exactly at, the two ends.  
Yet, \cite{chen2016} has a shortcoming: they place the inhomogeneity at the 3rd nodes of the chain. This results in the analysis becoming too complex to rigorously carry out, hence they fall back to numerical evidence coupled with a heuristic argument. More importantly, their protocol results in a highly sensitive readout time in the sense that the transfer fidelity fluctuates rapidly near the optimal time. See Figure~\ref{fig:feder}.

\begin{figure}
\begin{center}
\begin{minipage}{0.35\textwidth}
    \centering
    \includegraphics[width=\textwidth]{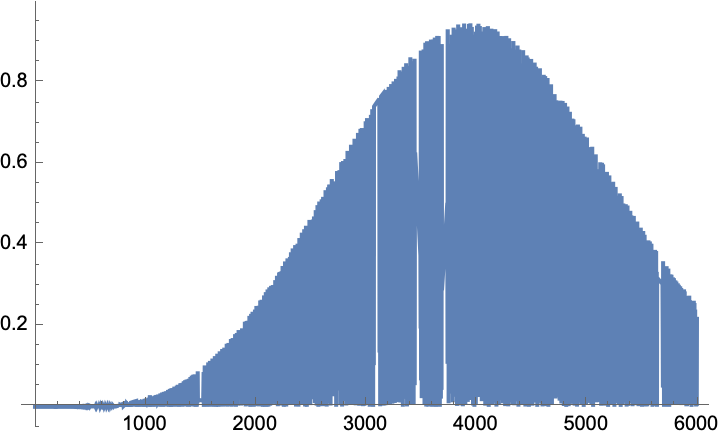}
\end{minipage}
\begin{minipage}{0.35\textwidth}
    \centering
    \includegraphics[width=\textwidth]{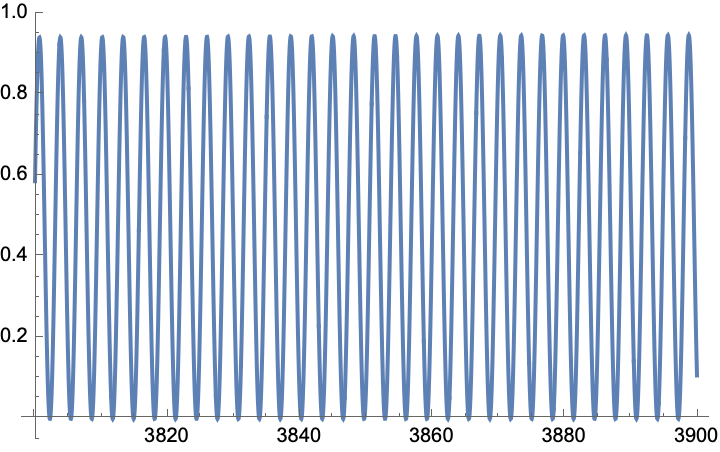}
\end{minipage}
    \caption{\label{fig:feder} Transfer fidelity as a function of time in the Chen et al. protocol for a chain of length $N=501$. The second image is zoomed in to near the optimal transfer time.}
\end{center}
\end{figure}

Our goal here is to present a simplification of their construction that addresses all the aforementioned issues: we are able to provide completely rigorous mathematical analysis, remove the sensitivity of the readout time (see Figure~\ref{fig:us}), and even allow for calibration of the fidelity--time tradeoff.

\subsection{Results}\label{sec:results}

We consider a quantum $XX$ spin chain on $n$ nodes with uniform couplings, and apply a magnetic field of strength $Q$ at the 2nd and 2nd-to-last nodes. Then the Hamiltonian restricted to the 1-excitation subspace becomes 
\[
H = A_{\text{path}} +
Q \cdot D_{2,n-1}
\] where $D_{2,n-1}$ denotes the diagonal matrix whose entries in the 2nd and $n-1$st rows are 1, the rest are 0. 

\begin{definition}
    The continuous time quantum walk of the system is given by $U(t) = e^{it H}$. If $|U(t_0)_{1,n}| > 1- \ep$, we say that there is quantum state transfer between the ends of the chain at time $t_0$ with fidelity at least $1-\ep$. 
\end{definition}

\begin{theorem}\label{thm:main}
Fix $\ep > 0$. Then for $n = \Omega(1/\ep)$ there is a $Q = O(\sqrt{n}/\ep)$ such that we get quantum state transfer between the chains ends with fidelity at least $1-\ep$ and in time $t_0 < O(n/\ep)$.
\end{theorem}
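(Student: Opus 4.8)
The strategy is spectral: write $U(t)_{1,n} = \sum_k e^{it\lambda_k}\, v_k(1)\overline{v_k(n)}$ over an orthonormal eigenbasis $\{v_k\}$ of $H$ with eigenvalues $\lambda_k$, and show that at a suitable time $t_0$ almost all of the mass of this sum aligns in phase, producing $|U(t_0)_{1,n}|$ close to $1$. The key structural feature to exploit is that $H$ differs from $A_{\mathrm{path}}$ only in two diagonal entries, at positions $2$ and $n-1$. By the reflection symmetry $i \mapsto n+1-i$, the eigenvectors split into symmetric and antisymmetric ones; $v_k(1)\overline{v_k(n)}$ equals $\pm |v_k(1)|^2$ according to the parity, so the transfer amplitude is $\sum_k \pm e^{it\lambda_k}|v_k(1)|^2$, and I want the symmetric eigenvalues and antisymmetric eigenvalues to interlace in such a way that at time $t_0$ consecutive terms differ in phase by (close to) $\pi$, making the signs cancel the alternation. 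This is the standard ``linear dispersion'' mechanism behind perfect state transfer on short chains, and the goal is to show that the large loop weight $Q$ effectively decouples nodes $2$ and $n-1$ and forces the relevant part of the spectrum to be nearly arithmetic.

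Concretely, I would first analyze the eigenvalue equation $H v = \lambda v$ as a three-term recurrence $v(j-1) + v(j+1) = \lambda v(j)$ holding at all interior nodes except $j=2, n-1$, where the recurrence is modified to $v(1) + v(3) = (\lambda - Q) v(2)$ and symmetrically at $n-1$. Parametrize $\lambda = 2\cos\theta$ and solve the free recurrence on the ``bulk'' block $\{2,\dots,n-1\}$; the boundary/defect conditions at both ends then reduce to a transcendental equation in $\theta$. For $Q$ large, I expect two regimes: (i) a small number of ``defect'' eigenvalues of size $\approx \pm Q$, which carry negligible weight $|v_k(1)|^2 = O(1/Q^2)$ on the endpoints and can be discarded at a cost $O(n/Q^2)$ in fidelity; and (ii) roughly $n$ ``bulk'' eigenvalues $\lambda_k = 2\cos\theta_k$ with $\theta_k$ close to an arithmetic progression $\approx k\pi/(n-1)$, coming from the effective free chain on the $n-2$ interior nodes with nearly-Dirichlet conditions induced by the heavy nodes. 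I then need quantitative control: (a) the $\theta_k$ are within $O(1/Q)$ (times something polynomial in $n$) of the ideal progression; (b) the endpoint weights $|v_k(1)|^2$ are, up to the same error, equal to the ideal values $\frac{c}{n}\sin^2\theta_k$ that would give perfect transfer on the clean chain; and (c) the symmetric/antisymmetric labeling alternates along the progression.

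With these estimates in hand, I would choose $t_0$ so that $t_0 \cdot (\lambda_{k+1}-\lambda_k) \approx \pi$ for the densest (near-linear) part of the spectrum near $\theta = \pi/2$, which forces $t_0 = \Theta(n)$; the parity alternation then turns $\sum \pm e^{it_0\lambda_k}|v_k(1)|^2$ into a nearly in-phase sum. The total error has three contributions to balance against $\ep$: the discarded defect mass $O(n/Q^2)$, the dephasing from nonlinearity of $\cos\theta$ away from $\theta=\pi/2$ (controlled by truncating to $|\theta_k - \pi/2| \lesssim$ const and incurring $O(1/n)$-type tails, which is where $n = \Omega(1/\ep)$ enters), and the perturbation error $O(\mathrm{poly}(n)/Q)$ in the eigendata. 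Demanding each be $O(\ep)$ yields $n = \Omega(1/\ep)$ and $Q = \Theta(\sqrt{n}/\ep)$ (the $\sqrt n$ because the $n/Q^2$ term is the binding one), and one may additionally tune $t_0$ within a window of width $O(n/\ep)$, giving the claimed $t_0 < O(n/\ep)$ and the robustness of the readout time. \emph{The main obstacle} is step (b)--(c): getting a genuinely rigorous, explicit bound on how far the perturbed eigenvectors' endpoint amplitudes deviate from the ideal sinusoidal profile, uniformly over the $\Theta(n)$ relevant eigenvectors, since a naive eigenvector perturbation bound degrades with the spectral gaps (which are $O(1/n)$). I expect this requires working directly with the transcendental equation and the explicit transfer-matrix form of the bulk solution rather than black-box perturbation theory, carefully tracking the $Q$- and $n$-dependence of the normalization constant.
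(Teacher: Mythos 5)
Your overall setup (reflection symmetry, parametrization $\lambda=2\cos\theta$, reduction to a transcendental equation at the defect sites) matches the paper's starting point, but the transfer mechanism you then invoke is not the one operating in this system, and your plan would fail at exactly the step you flag as the main obstacle. You are proposing the dispersive mechanism: $\Theta(n)$ bulk eigenvectors, each with endpoint weight $\approx \sin^2\theta_k/n$, with eigenvalues close enough to an arithmetic progression that all phases align at $t_0=\Theta(n)$. Two things go wrong. First, the loop weights do nothing to linearize the band: the eigenvalues remain $\approx 2\cos\theta_k$, and the cubic dispersion near $\theta=\pi/2$ means that over a time $\Theta(n)$ only a window of width $O((\ep/n)^{1/3})$ around $\pi/2$ stays coherent; under your assumed weight profile that window carries an $o(1)$ fraction of the endpoint mass, so the fidelity cannot approach $1$. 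This is precisely the obstruction that bounds the fidelity of the unmodified chain away from $1$, and nothing in your plan removes it. Second, your step (b) is false for this Hamiltonian: a large $Q$ at sites $2$ and $n-1$ pushes those sites far off resonance, and since site $1$'s \emph{only} neighbor is site $2$, sites $1$ and $n$ become nearly decoupled from the bulk. The bulk eigenvectors therefore have \emph{suppressed} endpoint amplitude, not the clean-chain sinusoidal profile, while essentially all of the endpoint weight concentrates on just two eigenvectors (one symmetric, one antisymmetric) whose eigenvalues sit near $0$, in resonance with the isolated end sites. In the paper's explicit form the unnormalized first entry is $\cos(\tfrac{k}{2}\theta)/\lambda$, which blows up as $\lambda\to 0$; Claim~\ref{cl:psi_bound} shows $\psi_i(1)^2\geq\tfrac12-\tfrac{\ep}{4}$ once $|\lambda_i|\lesssim\sqrt{\ep/k}$.

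Consequently the paper's argument is a two-level resonance (Rabi) argument rather than a dispersive one: all eigenvectors other than these two are discarded by a crude triangle inequality (Lemma~\ref{lem:transfer_fidelity} gives fidelity at least $2(\psi_1(1)^2+\psi_2(1)^2)-1$, with no phase alignment needed), and the transfer time is the half-period $t_0=\pi/|\lambda_1-\lambda_2|$ of the beat between the two localized modes. The quantitative work is then entirely different from what you outline: one must solve $S(\theta_1)=Q=A(\theta_2)$ with both $|\cos\theta_i|\lesssim\sqrt{\ep/k}$ (so that both modes are sufficiently localized) while keeping $\cos(\tfrac{k}{2}\theta_1)$ and $\sin(\tfrac{k}{2}\theta_2)$ bounded away from $0$, and must lower-bound the splitting $\theta_2-\theta_1=\Omega(\ep/k)$ to control $t_0$; the paper does this by an intermediate-value and monotonicity analysis of $S$ and $A$ near a carefully chosen $\theta_0$ (Theorem~\ref{thm:picktheta}), yielding $Q\approx 1/(2|\cos\theta_0|)=\Theta(\sqrt{k/\ep})$ and $t_0=O(k/\ep)$. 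Note also that in this mechanism $t_0$ is genuinely $\Theta(n/\ep)$, not $\Theta(n)$ with slack; your derivation of the $Q$ scaling from the budget $n/Q^2\lesssim\ep$ lands near the right answer but for the wrong reason, since the binding constraint is the localization condition $|\lambda|\lesssim\sqrt{\ep/k}$, not the discarded weight of the two $\pm Q$ defect modes.
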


In order to prove this result, in Section~\ref{sec:vectors} we describe the exact form of the eigenvectors of $H$. Then, in Section~\ref{sec:fidelity} we will show that for suitably chosen $Q$, exactly two of the eigenvectors will have most of their weight supported on nodes 1 and $n$. Thus by the symmetry of the system they will be approximately $(1/\sqrt{2}, 0, 0, \dots, 0, 1/\sqrt{2})$ and $(1/\sqrt{2}, 0, 0, \dots, 0, -1/\sqrt{2})$. If the corresponding eigenvalues are denoted by $\lambda_1$ and $\lambda_2$, then we get strong state transfer at time $t_0 = \pi/(\lambda_1-\lambda_2)$. 

\begin{figure}[h!]
\begin{center}
\begin{minipage}{0.35\textwidth}
    \centering
    \includegraphics[width=\textwidth]{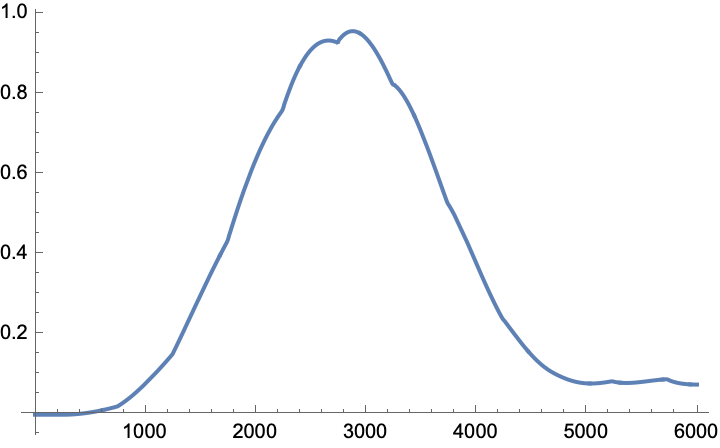}
\end{minipage}
\begin{minipage}{0.35\textwidth}
    \centering
    \includegraphics[width=\textwidth]{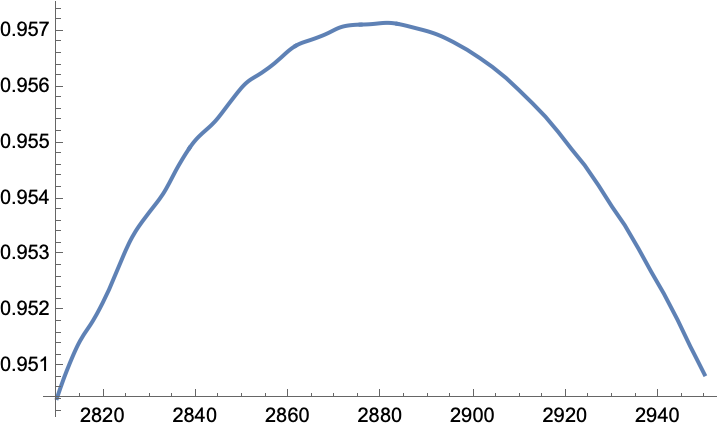}
\end{minipage}
    \caption{\label{fig:us} Transfer fidelity as a function of time in our protocol for a chain of length $N=501$ and $Q = 80$ The second image is zoomed in to near the optimal transfer time. Note that our protocol has transfer fidelity consistently above 0.95 in a large time window.}
\end{center}
\end{figure}

\section{Eigenvectors} \label{sec:vectors}

We will assume throughout that $Q \gg 1$.
Then, according to the Gershgorin Circle Theorem, exactly two eigenvalues of $H$ will fall within $Q\pm 2$, while all the others will fall between $\pm 2$. Since the eigenvectors corresponding to the two larger eigenvalues will be both concentrated on the 2nd and $n-1$st nodes, these won't play a significant role in the state transfer we are looking for. Hence we will be primarily concerned with the ``small'' eigenvalues.  
Due to the reflection symmetry of the system, we can choose an eigenbasis so that any vector \( \varphi \) in the basis satisfies \( \forall j: \varphi(j) = \varphi(n+1-j) \) or \( \forall j: \varphi(j) - \varphi(n+1-j) \). We call these symmetric (respectively, alternating) eigenvectors.

\subsection{Symmetric eigenvectors}

First, we investigate the symmetric case in detail.  
Let \( \lambda\) denote an eigenvalue in $(-2,2)$ whose corresponding eigenvector \( \varphi \) is symmetric with $\varphi(1)=\varphi(n)=1$. Let us denote the remaining entries as follows:
\[
\varphi = (1, a_0, a_1, \dots, a_k, 1)
\] where $k=n-3$. It turns out $k$ will be, in terms of notation, a more convenient parameter to use than $n$.

As $\varphi$ is assumed to be an eigenvector with eigenvalue $\lambda$, writing out the \( H \cdot \varphi = \lambda \cdot \varphi\) equation coordinate wise, we get a system of equations:
\begin{eqnarray}
\label{eq:a0} a_0 &=& \lambda\\
\label{eq:a0Q} 1+Q\cdot a_0 + a_1 &=& \lambda \cdot a_0\\
a_0 + a_2 &=& \lambda \cdot a_1 \\
a_1 + a_3 &=& \lambda \cdot a_2 \\
&\vdots&
\end{eqnarray}

Starting from the 3rd line these simply form a second-order linear recurrence relation from which it follows that 
\[
 a_j = C\ro^j + C' \brho^j \;\;(j = 0,\dots,k)
\]
where $\ro$ is the root of the characteristic equation $1+x^2 = \lambda x$. Since $|\lambda| < 2$, the two roots $\frac{\lambda \pm \sqrt{\lambda^2 - 4}}{2}$ are distinct complex numbers that are each others' conjugates:
\begin{eqnarray*}
    \ro &=& \frac{\lambda}{2} - \frac{i}{2}\sqrt{4 - \lambda^2}\\
    \brho &=& \frac{\lambda}{2} + \frac{i}{2}\sqrt{4 - \lambda^2}.
\end{eqnarray*} 
It is easy to see that $|\ro| = 1$. Since all the $a_j$s are real, it follows that $C'= \bar C$. Our next step is to gain more information about $C$ from the symmetry condition of the eigenvector. This can be written as $a_j = a_{k-j}$, hence 
\[
C \ro^j + \bar{C} \brho^j = C \ro^{k-j} + \bar{C} \brho^{k-j},
\]
or equivalently
\[
C (\ro^j - \ro^{k-j}) + \bar{C} (\brho^j - \brho^{k-j}) = 0.
\]
This implies that \( C (\ro^j - \ro^{k-j}) \) and its conjugate \( \bar{C} (\brho^j - \brho^{k-j}) \) are purely imaginary.

\begin{figure}
\centering
\begin{tikzpicture}[scale=2]

  \draw[->] (-1.2,0) -- (1.2,0);
  \draw[->] (0,-1.2) -- (0,1.2);

  \draw (0,0) circle (1);

  \foreach \n in {1,...,13} {
    \pgfmathsetmacro{\ang}{\n*23}  
    \coordinate (P\n) at ({cos(\ang)},{sin(\ang)});
    \fill (P\n) circle (0.015);
    \node[
      font=\scriptsize,
      shift={({0.12*cos(\ang)},{0.12*sin(\ang)})}
    ] at (P\n) {$\ro^{\n}$};
  }
\draw[<->, blue, thick] (P1) -- (P12);
\draw[<->, blue, thick] (1,0) -- (P13);

\coordinate (M) at ($(1,0)!0.5!(P13)$);
\draw[<-, red, thick]
    ($(0,0)!-2!(M)$) -- ($(0,0)!2!(M)$)
    node[midway, above, sloped] {$\mu^k$};

\end{tikzpicture}
\caption{$\ro^j - \ro^{k-j}$ vs $\mu^k$\label{pic:unit circle}}
\end{figure}

\begin{claim}
Let \(\mu\) be a unit complex number such that \(\mu^2=\ro\).  Then
\[
C = r\cdot\bar{\mu}^{k}
\]
for some real number \(r\).
\end{claim}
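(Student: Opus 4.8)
The plan is to extract the claim from the single structural fact already in hand: for each $j\in\{0,1,\dots,k\}$, the number $C(\ro^j-\ro^{k-j})$ is purely imaginary. The point of introducing a square root $\mu$ of $\ro$ is that it exposes the argument of every ``chord'' $\ro^j-\ro^{k-j}$, after which the claim is just a comparison of arguments.

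First I would factor out the central power: since $\ro=\mu^2$ and $|\mu|=1$,
\[
\ro^{j}-\ro^{k-j}=\mu^{2j}-\mu^{2k-2j}=\mu^{k}\bigl(\mu^{2j-k}-\mu^{-(2j-k)}\bigr)=i\,\beta_j\,\mu^{k},
\]
where, writing $\mu=e^{i\theta}$, we set $\beta_j:=2\sin\bigl((2j-k)\theta\bigr)\in\R$ (the bracket equals $\mu^{2j-k}-\overline{\mu^{2j-k}}$, hence is $2i$ times a real number). Geometrically this says that all the chords $\ro^j-\ro^{k-j}$ are perpendicular to the line through the origin in direction $\mu^k$ --- which is precisely the picture in Figure~\ref{pic:unit circle} --- since $\ro^j$ and $\ro^{k-j}$ are reflections of one another across that line (their product is $\ro^k=(\mu^k)^2$).

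Then I would plug this into the imaginary-axis condition: for every $j$,
\[
C(\ro^{j}-\ro^{k-j})=i\,\beta_j\,\bigl(C\mu^{k}\bigr)
\]
is purely imaginary, so whenever $\beta_j\neq0$ we must have $C\mu^{k}\in\R$. Setting $r:=C\mu^k$ then gives $C=r\,\overline{\mu}^{\,k}$, the desired conclusion; replacing $\mu$ by the other square root $-\mu$ merely multiplies $r$ by $(-1)^k$, so the choice is immaterial.

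The one genuine thing to verify --- and the only place I expect any resistance --- is that the $\beta_j$ do not all vanish. If $\beta_0=\beta_1=0$ then $k\theta$ and $(k-2)\theta$ both lie in $\pi\mathbb{Z}$, hence so does $2\theta$; then $\mu\in\{\pm1,\pm i\}$ and $\ro=\mu^2\in\{-1,1\}$, contradicting the fact that $\ro=\tfrac{\lambda}{2}-\tfrac{i}{2}\sqrt{4-\lambda^2}$ has nonzero imaginary part when $|\lambda|<2$. (Here I use that $j=0$ and $j=1$ are both valid indices, i.e.\ $k\ge1$, equivalently $n\ge4$.) Hence some $\beta_j\neq0$ and the argument is complete.
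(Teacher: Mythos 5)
Your proof is correct and follows essentially the same route as the paper: factor $\mu^k$ out of each chord $\ro^j-\ro^{k-j}$ to see that all of them are purely imaginary multiples of $\mu^k$, and then conclude from the reflection condition that $C\mu^k$ is real. You are in fact slightly more careful than the paper, which tacitly assumes that at least one of these chords is nonzero; your verification that $\beta_0$ and $\beta_1$ cannot both vanish (using that $\ro$ is not real when $|\lambda|<2$) closes that small implicit gap.
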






\begin{proof}
Choose \(\mu\) so that \(\mu^2=\ro\).  Then for any \(j\),
\[
\ro^j - \ro^{\,k-j}
= \mu^{2j} - \mu^{2(k-j)}
= \mu^k\bigl(\mu^{2j-k} - \mu^{k-2j}\bigr).
\]
But \(\mu^{2j-k}\) and \(\mu^{k-2j}\) are conjugates on the unit circle, so their difference is purely imaginary.  Hence \(\ro^j - \ro^{\,k-j}\) is a purely imaginary multiple of \(\mu^k\), and in particular all such differences are parallel (see Figure~\ref{pic:unit circle}). Thus $\bar{\mu}^k(\ro^j-\ro^{k-j})$ is purely imaginary.  We have seen that the reflection symmetry condition implies that $C(\ro^j-\ro^{k-j})$ is also purely imaginary. This is only possible if
\[
C = r\cdot\bar{\mu}^{k}
\]
for some $r\in \R$ as claimed.
\end{proof}

Recall that $\ro+\brho = \lambda$, while equation \eqref{eq:a0} implies that $\lambda = a_0 = C + \bar{C}= r(\mu^k+\bar{\mu}^k)$. Hence we get 
\[
r=\frac{\ro+\brho}{\mu^k+ \bar{\mu}^k}
\]
Substituting back into the formula for \(C\) yields
\[  
C = \bar{\mu}^k \cdot \frac{\ro+\brho}{\mu^k+ \bar{\mu}^k}
= \frac{\ro+\brho}{\ro^k+1}.
\]
 Thus, the component of the eigenvector are
\[
a_j = \frac{\ro+\brho}{\ro^k+1} \cdot \ro^j + \frac{\ro+\brho}{\brho^k+1} \cdot \brho^j = (\ro+\brho)\frac{\ro^{j-k/2} + \brho^{j-k/2}}{\ro^{k/2}+\brho^{k/2}}
\]
The final step is to relate $Q$ and $\ro$ which can be done using equation~\eqref{eq:a0Q}.
\[
\lambda\,a_0 = 1 + Q\cdot a_0 + a_1.
\]
From the previous equations this can be written as 
\[
(\ro+\brho)^2
= 1 + Q\,(\ro+\brho)
  + (\ro+\brho)\left(\frac{\ro}{1+\ro^k} + \frac{\brho}{1+\brho^k}\right).
\]
Next, dividing both sides by 
\(\ro+\brho\) yields
\[
\ro+\brho
= \frac{1}{\ro+\brho}
  + Q
  + \frac{\ro}{1+\ro^k}
  + \frac{\brho}{1+\brho^k}.
\]
Rearranging to isolate \(Q\) leads to

\begin{eqnarray} \label{eq:q+1/lambda}
    Q &=& \frac{-1}{\ro+\brho}+
 \ro+\brho
  - \frac{\ro}{1+\ro^k}
  - \frac{\brho}{1+\brho^k}\\&=&
     \frac{-1}{\ro+\brho}
+ \frac{\ro^{k+1}}{1+\ro^k} + \frac{\brho^{k+1}}{1+\brho^k} \\
 &=& 
\frac{-1}{\ro+\brho} + \frac{\ro^{k/2+1}+\brho^{k/2+1}}{\ro^{k/2} + \brho^{k/2}}.
\end{eqnarray}
Finally, let us denote the argument of $\ro$ by $\theta$. Then this last equation can be written in terms of trigonometric functions as follows.

\[
Q = \frac{-1}{2\cos\theta}
+ \frac{\cos(\tfrac{k+2}{2}\theta)}
       {\cos(\tfrac{k}{2} \theta)}= \frac{-1}{2\cos \theta}+\cos \theta - \tan \left(\tfrac{k}{2}\theta\right)\sin \theta .
\]
As a reminder, the eigenvalue is then written as 
\[
\lambda = \ro + \brho
        = 2\cos \theta.
\]

The following lemma summarizes the results of all these computations.

\begin{lemma}\label{lem:sym_case}
    For $Q > 2$, any eigenvalues $\lambda \in (-2,2)$ that corresponds to a symmetric eigenvector can be written as $\lambda = 2\cos \theta$ where $\theta$ is a solution of 
\[    Q =  \frac{-1}{2\cos \theta}+\cos \theta - \tan \left(\tfrac{k}{2}\theta\right)\sin \theta, \] 
and the corresponding eigenvector can be chosen as 
\[ \frac{\cos(\tfrac{k}{2}\theta)}{\lambda}\varphi = \left(\frac{\cos(\tfrac{k}{2}\theta)}{\lambda}, \cos(\tfrac{k}{2}\theta),\cos(\tfrac{k-2}{2}\theta),\cos(\tfrac{k-4}{2}\theta),\dots, \cos(\tfrac{k-2k}{2}\theta),\frac{\cos(\tfrac{k}{2}\theta)}{\lambda}\right).\]
Vice versa, all such $\theta$ solutions give rise to eigenvectors of this type.
\end{lemma}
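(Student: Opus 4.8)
The plan is to read Lemma~\ref{lem:sym_case} as the clean repackaging of the computation carried out just above it, and to prove it as two implications --- ``symmetric eigenvector'' $\Rightarrow$ ``$\theta$ solves the displayed equation'', and conversely --- together with a short check of the non-degeneracy conditions implicit in the statement. For the forward direction I would start from a symmetric eigenvector $\varphi$ with eigenvalue $\lambda\in(-2,2)$; one may assume $\varphi(1)\neq 0$ (a symmetric eigenvector vanishing at both endpoints is really an eigenvector of a shorter chain and is not the object the lemma describes) and rescale to $\varphi(1)=\varphi(n)=1$, writing $\varphi=(1,a_0,\dots,a_k,1)$ with $k=n-3$. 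Writing $H\varphi=\lambda\varphi$ out coordinatewise produces exactly \eqref{eq:a0}, \eqref{eq:a0Q}, and the interior three-term recurrence $a_{j-1}+a_{j+1}=\lambda a_j$ (the equations past the midpoint are redundant, being the reflections of these under $a_j=a_{k-j}$). Since $|\lambda|<2$, the characteristic equation $1+x^2=\lambda x$ has conjugate unit roots $\ro,\brho$, so $a_j=C\ro^j+\bar{C}\brho^j$; the Claim proved above then forces $C=r\bar{\mu}^k$ with $r\in\R$, and feeding this into \eqref{eq:a0} pins down $C=(\ro+\brho)/(\ro^k+1)$, exactly as in the text.

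Next I would pass to the argument $\theta$ of $\ro$, so that $\ro+\brho=2\cos\theta=\lambda$ and $\ro^m+\brho^m=2\cos(m\theta)$, which turns the closed form into $a_j=\lambda\cos\!\bigl(\tfrac{k-2j}{2}\theta\bigr)/\cos\!\bigl(\tfrac{k}{2}\theta\bigr)$ --- precisely the vector in the statement after multiplying by $\cos(\tfrac{k}{2}\theta)/\lambda$. The formula presupposes $\lambda\neq 0$ and $\cos(\tfrac{k}{2}\theta)\neq 0$; I would note that $\cos(\tfrac{k}{2}\theta)=0$ forces $a_0=\lambda=0$, and that $\lambda=0$ can occur only when $k\equiv 2\pmod 4$, in which case the corresponding eigenvector is the harmless $(1,0,-1,0,1,\dots,1)$, which is not concentrated at the chain's ends and plays no role in the later analysis, so this value may simply be set aside. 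Finally, substituting the closed form into \eqref{eq:a0Q} and dividing by $\lambda=\ro+\brho$, the telescoping simplification $\tfrac{\ro}{1+\ro^k}+\tfrac{\brho}{1+\brho^k}=(\ro+\brho)-\tfrac{\ro^{k/2+1}+\brho^{k/2+1}}{\ro^{k/2}+\brho^{k/2}}$ shown above, together with $\cos\bigl(\tfrac{k+2}{2}\theta\bigr)/\cos\bigl(\tfrac{k}{2}\theta\bigr)=\cos\theta-\tan(\tfrac{k}{2}\theta)\sin\theta$, lands on the stated equation, with $\lambda=2\cos\theta$.

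For the converse I would take any $\theta$ solving that equation. Finiteness of $\tfrac{-1}{2\cos\theta}$ and of $\tan(\tfrac{k}{2}\theta)$ gives $\cos\theta\neq 0$ and $\cos(\tfrac{k}{2}\theta)\neq 0$ for free, and since the right-hand side degenerates to a small constant when $\theta$ is an integer multiple of $\pi$, the hypothesis $Q>2$ rules that out; hence every solution has $\lambda:=2\cos\theta\in(-2,2)$ and the vector $\varphi$ given by the formula is well defined. I would then check $H\varphi=\lambda\varphi$ directly: the interior recurrence is the identity $\cos((m-1)\theta)+\cos((m+1)\theta)=2\cos\theta\cos(m\theta)$; equation \eqref{eq:a0} holds since $a_0=\lambda$; equation \eqref{eq:a0Q} is exactly the hypothesis on $\theta$ read backwards (undoing the division by $\lambda$ and the telescoping); and the remaining coordinates hold by the reflection symmetry of the formula. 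Hence $\varphi$ is a symmetric eigenvector with eigenvalue $2\cos\theta$, which closes the equivalence.

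I do not expect a genuine obstacle here: the lemma is essentially bookkeeping, and the one step that truly requires an argument --- isolating the constant $C$ from the reflection symmetry --- has already been carried out in the Claim. What needs a little attention is the non-degeneracy bookkeeping (assuming $\varphi(1)\neq 0$; checking $\lambda\neq 0$ and $\cos(\tfrac{k}{2}\theta)\neq 0$, and flagging the exceptional eigenvalue $0$ when $k\equiv 2\pmod 4$) and confirming that the converse is literally the forward computation run in reverse. The messiest single line --- rewriting the $\ro$-form of $Q$ in the $\tan$-form --- is no more than the elementary trigonometric identity already displayed in the text.
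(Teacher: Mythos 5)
Your proposal is correct and follows essentially the same route as the paper, which ``proves'' Lemma~\ref{lem:sym_case} by the computation carried out immediately before its statement (and, per the remark following it, handles the converse by direct verification of the eigenvector equation, exactly as you do). The one place you go beyond the paper --- flagging that the derivation implicitly divides by $\lambda$ and by $1+\ro^k$, and isolating the exceptional symmetric eigenvector $(1,0,-1,0,\dots,1)$ with $\lambda=0$ when $k\equiv 2\pmod 4$ --- is a legitimate gap in the paper's statement that you patch correctly; the only tiny quibble is that $\varphi(1)\neq 0$ need not be \emph{assumed}, since $\varphi(1)=0$ propagates through the three-term recurrence to force $\varphi=0$.
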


\begin{rmk}
    Our goal is to arrange $Q$ and $k$ such that one of these eigenvectors has almost all of its weight supported on the first and last nodes. This requires $\lambda$ (and hence $\cos \theta)$ to be very small in absolute value. At the same time $\cos(k\theta/2 )$ should not be too small. Specifically, as we will later establish, we need 
    \[|\lambda| \leq \left(\tfrac{\sqrt{\ep}}{\sqrt{k+1}} |\cos(k\theta/2)|\right)\] 
    in order for this eigenvector to help with $1-\ep$ fidelity state transfer.
\end{rmk}

\begin{rmk}
    For our purposes we don't need to know that all eigenvectors are of the form given above, just that we can find one in this form. Then the proof of the lemma could be shortened to simply verifying that such a vector is indeed an eigenvector for the given $Q$. We decided to include the whole calculation both because it shows that all eigenvectors are of this form and explains how this specific form was found.
\end{rmk}

\subsection{Alternating eigenvectors}

An entirely analogous analysis can be carried out for the alternating case. We state the summarizing lemma first, and then point out the minor differences between the two cases.

\begin{lemma}\label{lem:alt_case}
    For $Q > 2$, any eigenvalue $\lambda \in (-2,2)$ that corresponds to an antisymmetric eigenvector can be written as $\lambda = 2\cos \theta$ where $\theta$ is a solution of 
\[    Q =  \frac{-1}{2\cos \theta}+\cos \theta + \cot \left(\tfrac{k}{2}\theta\right)\sin \theta, \] 
and the corresponding eigenvector can be chosen as 
\[ \frac{\sin(\tfrac{k}{2}\theta)}{\lambda}\varphi = \left(\frac{\sin(\tfrac{k}{2}\theta)}{\lambda}, \sin(\tfrac{k}{2}\theta),\sin(\tfrac{k-2}{2}\theta),\sin(\tfrac{k-4}{2}\theta),\dots, \sin(\tfrac{k-2k}{2}\theta),\frac{\sin(\tfrac{-k}{2}\theta)}{\lambda}\right).\]
Vice versa, all such $\theta$ solutions give rise to eigenvectors of this type.
\end{lemma}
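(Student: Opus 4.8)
The plan is to run the argument of the symmetric case essentially verbatim, keeping track of the two places where the sign change in the reflection condition alters the conclusion. I would write an alternating (antisymmetric) eigenvalue's eigenvector, normalized so that $\varphi(1)=1$ and hence $\varphi(n)=-1$, as $\varphi=(1,a_0,a_1,\dots,a_k,-1)$ with $k=n-3$; the condition $\varphi(j)=-\varphi(n+1-j)$ now reads $a_j=-a_{k-j}$. The equations at nodes $1$ and $2$ are still exactly \eqref{eq:a0} and \eqref{eq:a0Q}, because the magnetic field sits at node $2$ in either parity, and the equations at the interior nodes $3,\dots,n-2$ are the same second-order recurrence as before; their mirror image at nodes $n-1,n$ is automatically satisfied once antisymmetry is imposed. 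Hence again $a_j=C\ro^j+\bar C\brho^j$ with $\ro$ the unit-modulus root of $1+x^2=\lambda x$ and $\bar C$ the conjugate of $C$.

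Feeding $a_j=-a_{k-j}$ into this formula gives $C(\ro^j+\ro^{k-j})+\bar C(\brho^j+\brho^{k-j})=0$, so $C(\ro^j+\ro^{k-j})$ is purely imaginary --- the only change from the symmetric computation is a plus sign in place of a minus. Writing $\ro=\mu^2$ with $|\mu|=1$ we get $\ro^j+\ro^{k-j}=\mu^k(\mu^{2j-k}+\mu^{k-2j})$, and now the factor $\mu^{2j-k}+\mu^{k-2j}$ is a \emph{sum} of two conjugate unit complex numbers, hence \emph{real}, rather than the purely imaginary difference we had before. Therefore $\ro^j+\ro^{k-j}$ is a real multiple of $\mu^k$, and $C(\ro^j+\ro^{k-j})$ being purely imaginary forces $C=is\,\bar\mu^k$ for some real $s$. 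Substituting back, with $\theta:=\arg\ro$ (so $\arg\mu=\theta/2$), a short computation turns the cosines of the symmetric case into sines: $a_j=2s\sin\big(\tfrac{k-2j}{2}\theta\big)$. Equation \eqref{eq:a0} then says $a_0=\lambda=2\cos\theta$, which determines $s=\lambda/\big(2\sin(\tfrac{k}{2}\theta)\big)$; rescaling $\varphi$ by $\sin(\tfrac{k}{2}\theta)/\lambda$ and using $-\sin(\tfrac{k}{2}\theta)=\sin(\tfrac{-k}{2}\theta)$ for the last entry produces exactly the vector in the statement.

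Finally I would read off the $Q$--$\theta$ relation from \eqref{eq:a0Q}: dividing $1+Qa_0+a_1=\lambda a_0$ by $a_0=\lambda$ gives $Q=\lambda-\tfrac1\lambda-\dfrac{\sin(\tfrac{k-2}{2}\theta)}{\sin(\tfrac{k}{2}\theta)}$, and the identity $\sin(\tfrac{k-2}{2}\theta)=\sin(\tfrac{k}{2}\theta)\cos\theta-\cos(\tfrac{k}{2}\theta)\sin\theta$ rewrites the last term as $\cos\theta-\cot(\tfrac{k}{2}\theta)\sin\theta$; with $\lambda=2\cos\theta$ this is precisely the displayed equation. The converse direction is obtained by reversing these steps: for any $\theta$ solving that equation the vector in the lemma satisfies the recurrence identically (since $\ro,\brho$ are its characteristic roots), satisfies \eqref{eq:a0} by inspection of its first two coordinates, satisfies \eqref{eq:a0Q} via the $Q$--relation, and satisfies the mirrored boundary equations by antisymmetry. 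I do not expect a genuine obstacle here; it is almost entirely bookkeeping carried over from the symmetric case. The one substantive point, worth a sentence, is that $\sin(\tfrac{k}{2}\theta)\neq0$ and $\lambda\neq0$ --- both automatic, since otherwise the equation for $Q$ would force $Q$ to be infinite --- so that the divisions above are legitimate.
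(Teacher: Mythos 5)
Your proposal is correct and follows essentially the same route as the paper: same recurrence, same deduction that $C = is\,\bar\mu^k$ from the antisymmetry condition, and the same $Q$--$\theta$ relation (you phrase it via $\sin(\tfrac{k-2}{2}\theta)$ where the paper uses $\tfrac{\rho^{k/2+1}-\brho^{k/2+1}}{\rho^{k/2}-\brho^{k/2}} = \sin(\tfrac{k+2}{2}\theta)/\sin(\tfrac{k}{2}\theta)$, which is the same identity). Your closing remark on the legitimacy of dividing by $\lambda$ and $\sin(\tfrac{k}{2}\theta)$ is a point the paper itself passes over silently, so no gap relative to the paper's own argument.
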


\begin{proof}
    We search for the eigenvector in the same form, leading to the same system of equations as in the symmetric case. The general solution can still be written as $a_j = C \ro^j + \bar{C}\brho^j$. However, the alternating condition now implies that 
    \[ 0 = a_j + a_{k-j} = C(\ro^j+\ro^{k-j}) + \bar{C}(\brho^j + \brho^{k-j}), \] thus 
    $C(\ro^j + \ro^{k-j})$ has to be purely imaginary. Using $\ro = \mu^2$ as before, we can compute \[\ro^j + \ro^{k-j} = \mu^{2j}+\mu^{2k-2j} = \mu^k (\mu^{2j-k} + \mu^{k-2j}) = \mu^k(\mu^{2j-k}+\bar{\mu}^{2j-k}),\] so $\bar{\mu}^k (\ro^j+\brho^j)$ is real, hence $C$ must be of the form 
    \[ C = i r \bar{\mu}^k\] where $r \in \R$. Next we get 
    \[ \ro+\brho = \lambda = a_0 = C + \bar{C} = ir(\bar{\mu}^k-\mu^k)\] so
    \[ir = \frac{\ro+\brho}{\bar{\mu}^k-\mu^k} \] leading to 
    \[ C = ir \bar{\mu}^k = \frac{\ro+\brho}{1-\ro^k}\] and 
    \[ a_j = (\ro+\brho) \left( \frac{\ro^j}{1-\ro^k} + \frac{\brho^j}{1-\brho^k}\right) = \lambda \frac{\ro^{k/2-j}-\brho^{k/2-j}}{\ro^{k/2}-\brho^{k/2}}\]

    Last, but not least, the final equation relating $Q$ and $\ro$ yields
\begin{eqnarray} \label{eq:q+1/lambda:alt}
    Q &=& \frac{-1}{\ro+\brho}+
 \ro+\brho
  - \frac{\ro}{1-\ro^k}
  - \frac{\brho}{1-\brho^k}\\&=&
     \frac{-1}{\ro+\brho}
- \frac{\ro^{k+1}}{1-\ro^k} - \frac{\brho^{k+1}}{1-\brho^k} \\
 &=& 
\frac{-1}{\ro+\brho} + \frac{\ro^{k/2+1}-\brho^{k/2+1}}{\ro^{k/2} - \brho^{k/2}}\\
&=& \frac{-1}{2\cos \theta} + \frac{\sin(\tfrac{k+2}{2} \theta)}{\sin(\tfrac{k}{2}\theta)} = \frac{-1}{2\cos\theta}+\cos \theta + \cot \left(\tfrac{k}{2}\theta\right)\sin \theta,
\end{eqnarray} exactly as claimed.
\end{proof}

\section{Ensuring strong state transfer} \label{sec:fidelity}

In this section we will first establish conditions under which high fidelity state transfer can be guaranteed, and then we will explain how to choose $Q$ and $k$ so that those conditions are satisfied. This involves carefully analyzing the behavior of the two trigonometric functions that appear in Lemmas~\ref{lem:sym_case} and~\ref{lem:alt_case}. We are going to refer to them as
\[ S(\theta) =  \frac{-1}{2\cos \theta}+\cos \theta - \tan \left(\tfrac{k}{2}\theta\right)\sin \theta\] and 
\[ A(\theta) = \frac{-1}{2\cos \theta}+\cos \theta + \cot \left(\tfrac{k}{2}\theta\right)\sin \theta.\]

\subsection{Transfer fidelity}

\begin{lemma}\label{lem:transfer_fidelity}
    Let $\psi_1, \dots, \psi_n$ be an orthonormal eigenbasis for $H$ with corresponding eigenvalues $\lambda_1, \dots, \lambda_n$. Suppose $\psi_1$ is symmetric and $\psi_2$ is alternating, and that both $\psi_1(1)^2, \psi_2(1)^2 \geq \tfrac{1}{2}-\tfrac{\ep}{4}$. Then the transfer fidelity between the endpoints at time $t_0 = \tfrac{\pi}{|\lambda_1-\lambda_2|}$ is at least $1-\ep$.
\end{lemma}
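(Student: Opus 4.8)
The plan is to expand the transferred amplitude $U(t_0)_{1,n} = \sum_{\ell} e^{it_0\lambda_\ell}\psi_\ell(1)\psi_\ell(n)$ in the eigenbasis and argue that only the two designated eigenvectors $\psi_1,\psi_2$ contribute meaningfully. First I would record the symmetry consequences: since $\psi_1$ is symmetric, $\psi_1(n)=\psi_1(1)$; since $\psi_2$ is alternating, $\psi_2(n)=-\psi_2(1)$. Combined with the hypothesis $\psi_1(1)^2,\psi_2(1)^2 \ge \tfrac12-\tfrac{\ep}{4}$ and the fact that $\psi_1(1)^2+\psi_2(1)^2 \le 1$ (because all $\psi_\ell(1)^2$ sum to $\|e_1\|^2=1$), each of $\psi_1(1)^2$ and $\psi_2(1)^2$ lies in $[\tfrac12-\tfrac{\ep}{4},\,\tfrac12+\tfrac{\ep}{4}]$, and the \emph{total} weight of $e_1$ on all the other eigenvectors satisfies $\sum_{\ell\ge 3}\psi_\ell(1)^2 \le \tfrac{\ep}{2}$. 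The same bound holds for $e_n$ by reflection symmetry. This is the quantitative heart of the argument and I expect it to be the step that needs the most care, though here it is essentially just bookkeeping with the orthonormality relations.

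Next I would estimate the two main terms. At $t_0 = \pi/|\lambda_1-\lambda_2|$ we have $e^{it_0\lambda_1} = -e^{it_0\lambda_2}$ (up to the common phase $e^{it_0\lambda_2}$, which has modulus $1$ and can be factored out). Hence
\[
\psi_1(1)\psi_1(n)e^{it_0\lambda_1} + \psi_2(1)\psi_2(n)e^{it_0\lambda_2}
= e^{it_0\lambda_2}\bigl(-\psi_1(1)^2 - \psi_2(1)^2\bigr),
\]
using $\psi_1(1)\psi_1(n)=\psi_1(1)^2$ and $\psi_2(1)\psi_2(n)=-\psi_2(1)^2$. Thus the modulus of the two-term contribution equals $\psi_1(1)^2+\psi_2(1)^2 \ge 1-\tfrac{\ep}{2}$.

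For the remaining terms I would bound them crudely by Cauchy--Schwarz:
\[
\left|\sum_{\ell\ge 3} e^{it_0\lambda_\ell}\psi_\ell(1)\psi_\ell(n)\right|
\le \sum_{\ell\ge 3}|\psi_\ell(1)|\,|\psi_\ell(n)|
\le \left(\sum_{\ell\ge 3}\psi_\ell(1)^2\right)^{1/2}\left(\sum_{\ell\ge 3}\psi_\ell(n)^2\right)^{1/2}
\le \frac{\ep}{2}.
\]
Combining, $|U(t_0)_{1,n}| \ge \bigl(1-\tfrac{\ep}{2}\bigr) - \tfrac{\ep}{2} = 1-\ep$, which is exactly the claimed fidelity. (If a cleaner final constant is wanted one can instead note the tail contributes at most $\sqrt{\tfrac{\ep}{2}\cdot\tfrac{\ep}{2}}=\tfrac{\ep}{2}$ and the main term is at least $1-\tfrac{\ep}{2}$, giving the bound with room to spare.) The only genuinely delicate point is making sure the per-vector lower bound on $\psi_1(1)^2,\psi_2(1)^2$ forces the complementary tail to be small simultaneously at both endpoints; everything else is the triangle inequality and the choice of $t_0$.
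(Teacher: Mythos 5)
Your proof is correct and follows essentially the same route as the paper: expand $U(t_0)_{1,n}$ in the eigenbasis, use $e^{it_0\lambda_1}=-e^{it_0\lambda_2}$ together with the symmetry/antisymmetry of $\psi_1,\psi_2$ to get the main term $\psi_1(1)^2+\psi_2(1)^2\ge 1-\tfrac{\ep}{2}$, and bound the tail by the complementary weight $\le\tfrac{\ep}{2}$. The only cosmetic difference is that you control the tail via Cauchy--Schwarz at both endpoints, whereas the paper uses that each $\psi_j$ is symmetric or alternating so that $|\psi_j(1)\psi_j(n)|=\psi_j(1)^2$ and the tail is exactly $1-\psi_1(1)^2-\psi_2(1)^2$; both give the same bound.
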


\begin{proof}
 Then, by the usual expansion, the transfer fidelity at time $t_0 = \tfrac{\pi}{|\lambda_1-\lambda_2|}$ can be computed as 
\[ |U(t_0)_{1,n}| = \left|\sum_{j=1}^n e^{it_0 \lambda_j} \psi_j(1) \psi_j(n)\right|.\]
 Then, by the triangle inequality we can estimate
\begin{multline*}
    |U(t_0)_{1,n}| \geq \left|e^{it_0 \lambda_1} \psi_1(1)^2 - e^{it_0 \lambda_2} \psi_2(1)^2\right| - \sum_{j=3}^n \psi_j(1)^2 =\\= \left| e^{it_0 \lambda_1}\left(\psi_1(1)^2 - e^{i t_0 (\lambda_2-\lambda_1)} \psi_2(1)^2\right)\right| - (1-\psi_1(1)^2-\psi_2(1)^2) \\ =2(\psi_1(1)^2)+\psi_2(1)^2)-1.
\end{multline*}
In particular, if both $\psi_1(1)^2, \psi_2(1)^2 \geq \tfrac{1}{2}-\tfrac{\ep}{4}$ then the transfer fidelity at $t_0$ is at least $1-\ep$.
\end{proof}

\begin{claim}\label{cl:psi_bound}
    For $j=1,2$ let $\theta_j$ be the angle in the representation of $\psi_j$ in Lemmas~\ref{lem:sym_case} and~\ref{lem:alt_case}  respectively. If $|\lambda_1| \leq \tfrac{\sqrt{\ep}}{\sqrt{k+1}} |\cos(k\theta_1/2)|$ then $\psi_1(1)^2 \geq \tfrac{1}{2}-\tfrac{\ep}{4}$. Similarly, if $|\lambda_2| \leq \tfrac{\sqrt{\ep}}{\sqrt{k+1}} |\sin(k\theta_2/2)|$ then $\psi_2(1)^2 \geq \tfrac{1}{2}-\tfrac{\ep}{4}$.
\end{claim}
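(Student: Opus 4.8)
\textit{Proof plan for Claim~\ref{cl:psi_bound}.} The approach is to plug the explicit eigenvectors of Lemmas~\ref{lem:sym_case} and~\ref{lem:alt_case} into a ratio estimate for $\psi_1(1)^2$. Treating the symmetric case first, let $w$ denote the unnormalized eigenvector given (up to a scalar) in Lemma~\ref{lem:sym_case}, so that $\psi_1 = w/\|w\|$ and
\[ w = \left(\tfrac{\cos(k\theta_1/2)}{\lambda_1},\ \cos\!\big(\tfrac{k}{2}\theta_1\big),\ \cos\!\big(\tfrac{k-2}{2}\theta_1\big),\ \dots,\ \cos\!\big(\tfrac{-k}{2}\theta_1\big),\ \tfrac{\cos(k\theta_1/2)}{\lambda_1}\right). \]
Here we may assume $\lambda_1\neq 0$: if $\lambda_1=0$ then the hypothesis forces $\cos(k\theta_1/2)=0$ as well, a degenerate situation not arising for the $\theta_1$ we will select in the next subsection. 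Then $\psi_1(1)^2 = w(1)^2/\|w\|^2$, where $w(1)^2 = \cos^2(k\theta_1/2)/\lambda_1^2$.

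Next I would bound the norm crudely. The two endpoint entries contribute $2w(1)^2$, while the $k+1$ interior entries are cosines, each of absolute value at most $1$, hence
\[ \|w\|^2 = 2\,\frac{\cos^2(k\theta_1/2)}{\lambda_1^2} + \sum_{j=0}^{k}\cos^2\!\Big(\tfrac{k-2j}{2}\theta_1\Big) \;\le\; 2\,\frac{\cos^2(k\theta_1/2)}{\lambda_1^2} + (k+1). \]
This gives $\psi_1(1)^2 \ge \dfrac{w(1)^2}{2w(1)^2+(k+1)}$, a quantity increasing in $w(1)^2$. Now the hypothesis $|\lambda_1| \le \tfrac{\sqrt\ep}{\sqrt{k+1}}\,|\cos(k\theta_1/2)|$ squares to $w(1)^2 = \cos^2(k\theta_1/2)/\lambda_1^2 \ge (k+1)/\ep$, so
\[ \psi_1(1)^2 \;\ge\; \frac{w(1)^2}{2w(1)^2+(k+1)} \;=\; \frac{1}{2 + (k+1)/w(1)^2} \;\ge\; \frac{1}{2+\ep} \;\ge\; \frac12 - \frac{\ep}{4}, \]
the last inequality because $(2+\ep)(2-\ep) = 4-\ep^2 \le 4$.

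The alternating case is identical verbatim with $\sin$ in place of $\cos$, using the eigenvector from Lemma~\ref{lem:alt_case} and the hypothesis $|\lambda_2| \le \tfrac{\sqrt\ep}{\sqrt{k+1}}|\sin(k\theta_2/2)|$. I do not expect any genuine obstacle: the proof is a one-line Rayleigh-type comparison, and the only point worth care is that the trivial bound $\sum\cos^2 \le k+1$ on the ``bulk'' mass is precisely calibrated to the constant in the claim (a sharper $(k+1)/2$ bound would only help), together with the harmless exclusion of the degenerate case $\lambda_1 = 0$.
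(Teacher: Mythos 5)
Your argument is correct and is essentially the paper's own proof: both set $B^2=\cos^2(k\theta_1/2)/\lambda_1^2\ge (k+1)/\ep$, bound the bulk mass by $k+1$, and conclude $\psi_1(1)^2\ge B^2/(2B^2+k+1)\ge \tfrac12-\tfrac{\ep}{4}$. The only (harmless) difference is your explicit handling of the $\lambda_1=0$ degeneracy and a slightly different final algebraic step.
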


\begin{proof}
     Let $B = \tfrac{\cos(k \theta_1/2)}{\lambda_1}$. By the assumption on $\lambda_1$ we see that $B^2 \geq \tfrac{k+1}{\ep}$. Note that in Lemmas~\ref{lem:sym_case} and~\ref{lem:alt_case} the eigenvectors are not normalized, but the $\psi_j$ eigenvectors in Claim~\ref{cl:psi_bound} are unit length. Thus 
    \[\psi_1(1)^2 = \frac{B^2}{2B^2 + \sum_0^k |\cos(\mbox{something})|^2}\geq \frac{B^2}{2B^2+k+1} \geq \frac{1}{2} \left( 1 - \frac{k+1}{2B^2}\right)\geq \frac{1}{2} - \frac{\ep}{4}\] as claimed. The second part follows in an identical fashion.
\end{proof}

Combining Lemmas~\ref{lem:sym_case},~\ref{lem:alt_case}, and~\ref{lem:transfer_fidelity} with Claim~\ref{cl:psi_bound} leads to the following. 
\begin{corollary}\label{cor:transfer}
  If $\theta_1, \theta_2$ are such that   
  \begin{enumerate}
      \item $S(\theta_1) = Q = A(\theta_2)$
      \item $2|\cos\theta_1| \leq \tfrac{\sqrt{\ep}}{\sqrt{k+1}} |\cos(k\theta_1/2)|$
      \item $2|\cos \theta_2| \leq \tfrac{\sqrt{\ep}}{\sqrt{k+1}} |\sin(k\theta_2/2)|$
  \end{enumerate} then $H$ admits quantum state transfer between the endpoints at time $\tfrac{\pi}{2|\cos \theta_1 - \cos \theta_2|}$ with fidelity at least $1-\ep$.
\end{corollary}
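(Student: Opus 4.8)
The plan is to chain together Lemmas~\ref{lem:sym_case},~\ref{lem:alt_case} and~\ref{lem:transfer_fidelity} with Claim~\ref{cl:psi_bound}; the corollary is essentially their conjunction, so after the assembly the only real work is two small non-degeneracy checks. First I would use hypothesis~(1): since $Q = S(\theta_1) = A(\theta_2) > 2$, Lemma~\ref{lem:sym_case} turns the equation $S(\theta_1) = Q$ into a symmetric eigenvector of $H$ with eigenvalue $\lambda_1 = 2\cos\theta_1 \in (-2,2)$ and the explicit coordinates displayed there, and symmetrically Lemma~\ref{lem:alt_case} turns $A(\theta_2) = Q$ into an antisymmetric eigenvector with eigenvalue $\lambda_2 = 2\cos\theta_2$. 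Rescaling both to unit length gives vectors $\psi_1$ (symmetric) and $\psi_2$ (alternating).

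Next I would check the hypotheses of Lemma~\ref{lem:transfer_fidelity}. Orthogonality of $\psi_1$ and $\psi_2$ is free: $H$ commutes with the reflection permutation matrix $R$ (which is orthogonal) since both $A_{\text{path}}$ and $D_{2,n-1}$ are $R$-invariant, and $R\psi_1 = \psi_1$, $R\psi_2 = -\psi_2$, whence $\langle\psi_1,\psi_2\rangle = \langle R\psi_1,R\psi_2\rangle = -\langle\psi_1,\psi_2\rangle = 0$. Having two orthonormal eigenvectors of the self-adjoint $H$, extend to a full orthonormal eigenbasis $\psi_1,\dots,\psi_n$ with $\psi_1$ symmetric and $\psi_2$ alternating, exactly as Lemma~\ref{lem:transfer_fidelity} requires. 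Rewriting hypothesis~(2) as $|\lambda_1| = 2|\cos\theta_1| \le \tfrac{\sqrt{\ep}}{\sqrt{k+1}}|\cos(k\theta_1/2)|$ is precisely the hypothesis of Claim~\ref{cl:psi_bound}, which yields $\psi_1(1)^2 \ge \tfrac12 - \tfrac{\ep}{4}$; hypothesis~(3) yields $\psi_2(1)^2 \ge \tfrac12 - \tfrac{\ep}{4}$ in the same way. Lemma~\ref{lem:transfer_fidelity} then applies verbatim, giving transfer fidelity at least $1-\ep$ at time $t_0 = \tfrac{\pi}{|\lambda_1-\lambda_2|} = \tfrac{\pi}{2|\cos\theta_1-\cos\theta_2|}$, which is the asserted time.

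The hard part --- insofar as there is one --- is not an estimate but making the implicit non-degeneracy hypotheses explicit. The eigenvector formulas in Lemmas~\ref{lem:sym_case} and~\ref{lem:alt_case}, and the quantity $B$ in Claim~\ref{cl:psi_bound}, divide by $\lambda$, so one needs $\lambda_1,\lambda_2 \ne 0$ (equivalently $\theta_1,\theta_2 \ne \tfrac{\pi}{2}$); and the formula $t_0 = \pi/|\lambda_1-\lambda_2|$ needs $\lambda_1 \ne \lambda_2$, i.e.\ $\cos\theta_1 \ne \cos\theta_2$. Both conditions are met by the concrete $\theta_1,\theta_2$ to be selected later in Section~\ref{sec:fidelity}, where $\lambda_1$ and $\lambda_2$ come out small, nonzero, and of opposite sign; so I would either record $\lambda_1,\lambda_2\ne 0$ and $\lambda_1\ne\lambda_2$ as standing hypotheses of the corollary or simply invoke that explicit choice. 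No new quantitative input beyond the four cited results is needed.
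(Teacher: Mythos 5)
Your proposal is correct and follows exactly the route the paper intends: the paper offers no proof beyond the sentence ``Combining Lemmas~\ref{lem:sym_case},~\ref{lem:alt_case}, and~\ref{lem:transfer_fidelity} with Claim~\ref{cl:psi_bound} leads to the following,'' and your assembly of those four results is that combination, with the reflection-symmetry orthogonality argument and the non-degeneracy caveats ($\lambda_i \neq 0$, $\lambda_1 \neq \lambda_2$) being worthwhile details the paper leaves implicit.
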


The plan to find such $\theta_1, \theta_2$ is the following. To simplify calculations, we are going to ensure that both $\sin(k\theta_i/2)$ and $\cos(k\theta_i/2)$ are bounded away from 0. Then it is sufficient to pick them such that $|\cos(\theta_i)| \approx \sqrt{\ep/k}$ to satisfy the last two conditions of Corollary~\ref{cor:transfer}.

We are going to achieve this by first fixing $\theta_0$ such that $\cos(k\theta_0/2)=\sin(k\theta_0/2) = 1/\sqrt{2}$ and such that $|\cos(\theta_0)| \approx \sqrt{\ep/k}$. Then we let $Q = F(\theta_0):= \cos\theta_0 - 1/(2\cos \theta_0)$, and show that solutions to $S(\theta_1) = Q = A(\theta_2)$ exist in a small neighborhood of $\theta_0$. To keep $Q$ positive, we are going to choose $\pi/2 < \theta_0 < \pi$ so $\cos(\theta_0) < 0$. 

\begin{theorem}\label{thm:picktheta} Let $0<c<1/2$ be fixed and let $m$ be an integer such that 
\begin{equation}\label{eq:m_cond} 
k+\frac{2c}{\pi}\sqrt{k}+\frac{1}{2} < 8m+1 < k + \frac{1}{\pi}\sqrt{k}-\frac{1}{2}
\end{equation} 
and denote \(\theta_0 = \tfrac{16m+2}{4k}\pi\) and \(Q = F(\theta_0)\). Then there exists $\theta_1, \theta_2$ such that 
    \begin{align*}\frac{16m+1}{4k} \pi < \theta_1 < \frac{16m+2}{4k}\pi < \theta_2 < \frac{16m+3}{4k}\pi  && \mbox{and} && S(\theta_1)= Q = A(\theta_2).\end{align*}
Furthermore, \[\theta_2-\theta_1 \geq \frac{\sqrt{2}-1}{3k}c^2.\]
\end{theorem}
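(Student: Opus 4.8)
\emph{Setup.} Write $F(\theta)=\cos\theta-\tfrac1{2\cos\theta}$, so that $S(\theta)=F(\theta)-\tan(\tfrac k2\theta)\sin\theta$ and $A(\theta)=F(\theta)+\cot(\tfrac k2\theta)\sin\theta$. Since $\theta_0=\tfrac{16m+2}{4k}\pi$ we have $\tfrac k2\theta_0=2m\pi+\tfrac\pi4$, hence $\tan(\tfrac k2\theta_0)=\cot(\tfrac k2\theta_0)=1$, and \eqref{eq:m_cond} forces $\tfrac\pi2<\theta_0<\pi$, so $\sin\theta_0>0$. Therefore $S(\theta_0)=Q-\sin\theta_0<Q<Q+\sin\theta_0=A(\theta_0)$. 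The plan is to use the intermediate value theorem to find $\theta_1$ to the left of $\theta_0$ where $S$ comes back up to $Q$, and $\theta_2$ to the right where $A$ comes back down to $Q$; the size of $\theta_2-\theta_1$ will then come from a mean value estimate on $S'$.

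\emph{Locating $\theta_1$ and $\theta_2$.} First record the elementary consequences of \eqref{eq:m_cond}: $\tfrac\pi2<\tfrac{16m+1}{4k}\pi<\theta_0<\tfrac{16m+3}{4k}\pi<\pi$, so throughout $[\tfrac{16m+1}{4k}\pi,\tfrac{16m+3}{4k}\pi]$ one has $\cos\theta<0$, $\sin\theta>0$, and $F'(\theta)=-\sin\theta(1+\tfrac1{2\cos^2\theta})<0$, i.e.\ $F$ is strictly decreasing. The left inequality of \eqref{eq:m_cond} gives $\tfrac{16m+1}{4k}\pi-\tfrac\pi2>\tfrac c{\sqrt k}$, and the right inequality gives $\tfrac{16m+3}{4k}\pi-\tfrac\pi2<\tfrac1{2\sqrt k}$; consequently $\cos^2\theta<\tfrac1{4k}$ on each of the sub-intervals $[\tfrac{16m+1}{4k}\pi,\theta_0]$ and $[\theta_0,\tfrac{16m+3}{4k}\pi]$, so $|F'(\theta)|>2k\sin\theta>k$ there. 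Each sub-interval has length $\tfrac\pi{4k}$, so integrating $|F'|$ gives $F(\tfrac{16m+1}{4k}\pi)-Q>\tfrac\pi4$ and $Q-F(\tfrac{16m+3}{4k}\pi)>\tfrac\pi4$. Since $\tfrac k2\theta$ ranges in $(2m\pi+\tfrac\pi8,2m\pi+\tfrac{3\pi}8)$, the functions $\tan(\tfrac k2\theta)$ and $\cot(\tfrac k2\theta)$ are continuous there and equal $\tan(\tfrac\pi8)=\cot(\tfrac{3\pi}8)=\sqrt2-1$ at the relevant endpoints; combined with $\sqrt2-1<\tfrac\pi4$ this yields $S(\tfrac{16m+1}{4k}\pi)>Q$ and $A(\tfrac{16m+3}{4k}\pi)<Q$. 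As $\cos\theta\neq0$ and neither $\tan(\tfrac k2\theta)$ nor $\cot(\tfrac k2\theta)$ has a pole on the relevant closed interval, $S$ and $A$ are continuous there, and the intermediate value theorem produces $\theta_1\in(\tfrac{16m+1}{4k}\pi,\theta_0)$ with $S(\theta_1)=Q$ and $\theta_2\in(\theta_0,\tfrac{16m+3}{4k}\pi)$ with $A(\theta_2)=Q$.

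\emph{The separation.} From $S(\theta_1)=Q$ and $S(\theta_0)=Q-\sin\theta_0$, the mean value theorem gives $\theta_0-\theta_1=\sin\theta_0/|S'(\xi)|$ for some $\xi\in(\theta_1,\theta_0)$. On $[\theta_1,\theta_0]$ we have $|S'(\theta)|\le|F'(\theta)|+\tfrac k2\sec^2(\tfrac k2\theta)\sin\theta+|\tan(\tfrac k2\theta)\cos\theta|\le(1+\tfrac1{2\cos^2\theta})+k+1$, using $\sec^2(\tfrac k2\theta)\le2$ and $\tan(\tfrac k2\theta)\le1$ there. Since $|\cos\theta|\ge|\cos(\tfrac{16m+1}{4k}\pi)|\ge\tfrac{2c}{\pi\sqrt k}$ on that interval, $\tfrac1{2\cos^2\theta}\le\tfrac{\pi^2k}{8c^2}$, and a short computation using $0<c<\tfrac12$ gives $|S'|\le\tfrac{2k}{c^2}$. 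Hence $\theta_2-\theta_1>\theta_0-\theta_1\ge\tfrac{\sin\theta_0\,c^2}{2k}\ge\tfrac{c^2}{4k}>\tfrac{(\sqrt2-1)c^2}{3k}$, as claimed (the $A$-side works equally well).

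\emph{Expected obstacle.} The only delicate part is the constant bookkeeping around \eqref{eq:m_cond}: one must check that the hypothesis pins $\cos^2\theta$ between a constant multiple of $c^2/k$ (using the $8m+1>k+\tfrac{2c}\pi\sqrt k+\tfrac12$ side) and a constant multiple of $1/k$ (using the $8m+1<k+\tfrac1\pi\sqrt k-\tfrac12$ side), uniformly over the two sub-intervals, and then that these constants together with $\tan(\tfrac\pi8)=\sqrt2-1<\tfrac\pi4$ and $c<\tfrac12$ leave enough room both to make the sign changes strict and to produce the explicit constant $\tfrac{\sqrt2-1}3$. Everything else---continuity, the monotonicity of $F$, and the bounds on $\sec^2,\tan,\cot$---is immediate from the location of $\tfrac k2\theta$ inside $(2m\pi+\tfrac\pi8,2m\pi+\tfrac{3\pi}8)$.
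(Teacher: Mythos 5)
Your proof is correct and, for the existence of $\theta_1,\theta_2$, follows essentially the same route as the paper: evaluate $S$ and $A$ at $\theta_0$ (where $\tan=\cot=1$), show $F$ drops by more than $\pi/4$ over each eighth-period using the lower bound $|F'|\ge k$ (which you rederive from $\cos^2\theta<\tfrac1{4k}$; the paper outsources this to its Claim~3.4), compare $\pi/4$ against $\tan(\pi/8)=\sqrt2-1$, and invoke the intermediate value theorem. The only genuine divergence is the separation bound: the paper shows $F(\theta_1)-F(\theta_2)\ge\sqrt2-1$ by evaluating the correction terms at $\hat\theta_1,\hat\theta_2$ and then divides by the upper bound $|F'|\le 3k/c^2$, whereas you apply the mean value theorem to $S$ itself on $[\theta_1,\theta_0]$, bounding $|S'|\le 2k/c^2$ directly (the extra terms $\tfrac k2\sec^2(\tfrac k2\theta)\sin\theta\le k$ and $|\tan(\tfrac k2\theta)\cos\theta|\le1$ are absorbed using $c<\tfrac12$ and $k\ge1$). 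Your variant is equally valid, uses only one side of $\theta_0$, and in fact yields the slightly stronger constant $\tfrac14$ in place of $\tfrac{\sqrt2-1}{3}$; the paper's version has the mild advantage of reusing its Claim~3.4 verbatim and treating the two sides symmetrically.
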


To prove this, we will show that $A(\theta)-Q$ and $S(\theta)-Q$ both change signs in the respective intervals. To do so, we need to investigate the behavior of $S(\theta)$ and $A(\theta)$ in the $(\pi/2,\pi)$ interval.

\subsection{Calculus}

\begin{claim}\label{cl:monotone}
    $F(\theta):= \tfrac{-1}{2\cos \theta} + \cos \theta$ is monotone decreasing and convex on $(\pi/2,\pi)$. Furthermore both 
    \[S(\theta)-F(\theta) = -\tan\left(\tfrac{k}{2} \theta\right) \sin \theta\] and 
    \[A(\theta) - F(\theta)= \cot\left(\tfrac{k}{2} \theta\right) \sin \theta\] are monotone decreasing on $(\pi/2,3\pi/4)$ everywhere they are non-singular.
\end{claim}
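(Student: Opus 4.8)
The plan is to prove the three assertions one at a time, each reducing to the sign of a first (or second) derivative.

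\emph{Monotonicity and convexity of $F$.} Differentiating gives $F'(\theta) = -\sin\theta\bigl(\tfrac{1}{2\cos^2\theta}+1\bigr)$, which is strictly negative throughout $(\pi/2,\pi)$ since $\sin\theta>0$ there; this is the claimed monotone decrease, and $F$ is smooth on the open interval (its only singularity, $\cos\theta=0$, is the endpoint $\pi/2$). For convexity I would write $F = -\tfrac12\sec\theta + \cos\theta$ and check that each summand is convex on $(\pi/2,\pi)$: $(\cos\theta)''=-\cos\theta>0$ and $(-\tfrac12\sec\theta)''=-\tfrac12\sec\theta(\sec^2\theta+\tan^2\theta)>0$, both because $\cos\theta<0$ (equivalently, compute $F''(\theta) = -\tfrac12\,\tfrac{1+\sin^2\theta}{\cos^3\theta}-\cos\theta$ directly, a sum of two positive terms when $\cos\theta<0$). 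A sum of convex functions is convex.

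\emph{Monotonicity of $S-F$ and $A-F$.} For $g(\theta):=S(\theta)-F(\theta) = -\tan(\tfrac k2\theta)\sin\theta$ I would compute
\[
g'(\theta) = -\tfrac k2\sec^2(\tfrac k2\theta)\,\sin\theta \;-\; \tan(\tfrac k2\theta)\cos\theta
\]
and argue $g'(\theta)<0$ at every $\theta\in(\pi/2,3\pi/4)$ where $g$ is defined. The first term is negative. If $\tan(\tfrac k2\theta)\cos\theta\ge 0$ we are done; otherwise bound the second term using $|\tan u| = |\sin u\cos u|/\cos^2 u\le\sec^2 u$, so that $|\tan(\tfrac k2\theta)\cos\theta|\le\sec^2(\tfrac k2\theta)\,|\cos\theta|$ and hence $g'(\theta)\le\sec^2(\tfrac k2\theta)\bigl(|\cos\theta|-\tfrac k2\sin\theta\bigr)$. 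This is negative because on $(\pi/2,3\pi/4)$ the functions $\sin\theta$ and $|\cos\theta|$ are monotone with $\sin\theta>\tfrac1{\sqrt2}>|\cos\theta|$, so $\tfrac k2\sin\theta\ge\sin\theta>|\cos\theta|$ whenever $k\ge 2$ (which we may assume, as $k=n-3$ with $n$ large). The case $A(\theta)-F(\theta)=\cot(\tfrac k2\theta)\sin\theta$ is handled identically, replacing $\sec,\tan$ by $\csc,\cot$ and using $|\cot u|\le\csc^2 u$. Since in each case $g$ is smooth with $g'<0$ on every connected component of its domain inside $(\pi/2,3\pi/4)$, it is monotone decreasing wherever it is non-singular.

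\emph{Main difficulty.} None of this is hard --- it is elementary calculus --- and the only step requiring a little thought is the sign analysis of $g'$ (and its $A$-analogue), where one must see that the $k$-proportional term dominates the remaining $O(1)$ term. The two facts that make this transparent are the uniform bounds $|\tan u|\le\sec^2 u$ and $|\cot u|\le\csc^2 u$, together with $\sin\theta>|\cos\theta|$ on $(\pi/2,3\pi/4)$; it is precisely this last inequality (which fails past $3\pi/4$) that pins the monotonicity statement to the subinterval $(\pi/2,3\pi/4)$.
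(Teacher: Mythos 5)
Your proposal is correct and follows essentially the same route as the paper: direct computation of $F'$, $F''$, and the derivatives of $S-F$, $A-F$, followed by elementary sign analysis showing the term proportional to $k/2$ dominates the $O(1)$ term (the paper clears the denominator and uses $k\sin x>1\ge|\cos x\sin(kx)|$, while you use $|\tan u|\le\sec^2 u$ and $\sin\theta>|\cos\theta|$; the paper verifies $F''>0$ by an AM--GM inequality where you split $F$ into two convex summands). These are cosmetic differences only; the argument is sound.
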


\begin{proof} The first statement follows by computing 
\begin{equation}\label{eq:F'} F'(x) = \frac{-\sin x}{2\cos^2 x} - \sin x\end{equation} and 
\[ F''(x) = -\cos x - \frac{1}{2\cos x} - \frac{\sin^2 x}{\cos^3 x} = -\cos x +\frac{1}{2\cos x} - \frac{1}{\cos^3 x},\] and noting that $F'(x) < 0$ term wise on $(\pi/2,\pi)$ and $F''(x) > 0$ as $y+1/y^3 \geq 2/y > 1/(2y)$ for $y = -\cos x >0$ by Cauchy-Schwarz.

     For the second, let us compute the derivative:
    \[\frac{d}{d x} (S(x)-F(x))= -\cos x \tan\left(\tfrac{k}{2} x\right)- \frac{k}{2}\frac{\sin x}{\cos^2\left(\tfrac{k}{2} x\right)}.
    \] To see that this is indeed negative, we need to show 
    \[ k \sin x >-2\cos x \sin\left(\tfrac{k}{2} x\right)\cos\left(\tfrac{k}{2} x\right) = -\cos x \sin(k x). \] This will clearly hold as long as $k\sin x >1$, so at least up to roughly $\pi - 1/k > 3\pi/4$ when $k\geq 2$ while the $k=1$ cases is trivial. The third statement follows in a similar spirit.
\end{proof}

We are going to need a slightly more precise control on $F'(x)$ on an interval near $\pi/2$. Standard trigonometric estimates together with \eqref{eq:F'} imply the following:
\begin{claim}\label{cl:F'}
Let $ 0 < c < 1/2$. Then for any $\tfrac{\pi}{2}+\tfrac{c}{\sqrt{k}} \leq x \leq  \tfrac{\pi}{2} + \tfrac{1}{2\sqrt{k}}$ we have 
\[ k \leq |F'(x)| \leq \frac{3k}{c^2}.\]
\end{claim}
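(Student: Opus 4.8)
The plan is to eliminate the trigonometry up front by the substitution $x = \tfrac{\pi}{2} + t$, which turns the hypothesis into $\tfrac{c}{\sqrt{k}} \le t \le \tfrac{1}{2\sqrt{k}}$. Under it $\sin x = \cos t$ and $\cos x = -\sin t$, so by \eqref{eq:F'},
\[ |F'(x)| \;=\; \frac{\sin x}{2\cos^2 x} + \sin x \;=\; \cos t\left(\frac{1}{2\sin^2 t} + 1\right). \]
Note the interval for $t$ is nonempty precisely because $c < \tfrac12$, and that $0 < t \le \tfrac{1}{2\sqrt{k}} \le \tfrac12$ for every $k \ge 1$. On this range I will use only three elementary facts: $\cos t \ge 1 - \tfrac{t^2}{2} \ge \tfrac{7}{8}$, $\sin t \le t$, and Jordan's inequality $\sin t \ge \tfrac{2}{\pi}t$ (valid since $t \le \tfrac12 < \tfrac{\pi}{2}$).

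For the lower bound, discard the $+1$, bound $\cos t \ge \tfrac78$, use $\sin t \le t$, and finally $t^2 \le \tfrac{1}{4k}$:
\[ |F'(x)| \;\ge\; \frac{\cos t}{2\sin^2 t} \;\ge\; \frac{7}{16\,t^2} \;\ge\; \frac{7k}{4} \;\ge\; k. \]
For the upper bound, bound $\cos t \le 1$, use $\sin t \ge \tfrac{2}{\pi}t$ so that $\tfrac{1}{2\sin^2 t} \le \tfrac{\pi^2}{8 t^2}$, and then $t^2 \ge \tfrac{c^2}{k}$:
\[ |F'(x)| \;\le\; \frac{1}{2\sin^2 t} + 1 \;\le\; \frac{\pi^2}{8}\cdot\frac{k}{c^2} + 1 \;\le\; \frac{2k}{c^2} + \frac{k}{c^2} \;=\; \frac{3k}{c^2}, \]
where in the last step I used $\tfrac{\pi^2}{8} < 2$ together with $1 \le \tfrac{k}{c^2}$ (which holds since $c^2 < \tfrac14 < 1 \le k$).

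There is no genuine obstacle here; the only thing to watch is that the numerical slack in the three trigonometric estimates is actually present on the whole interval — in particular that $t$ never exceeds $\tfrac12$, so that $\cos t \ge \tfrac78$ and Jordan's inequality both apply, and that the deliberately generous constant $3$ in the statement, rather than something sharper, is exactly what lets the crude bound $\tfrac{\pi^2}{8} < 2$ close the computation uniformly in $k$ and $c$.
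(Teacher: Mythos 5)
Your proof is correct, and it is exactly the sort of argument the paper has in mind: the paper gives no proof of this claim at all, merely asserting that it follows from ``standard trigonometric estimates'' applied to \eqref{eq:F'}, and your substitution $x=\tfrac{\pi}{2}+t$ together with $\cos t\ge 1-t^2/2$, $\tfrac{2}{\pi}t\le\sin t\le t$ supplies precisely those estimates with all constants verified. No gaps.
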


After this preparation we are ready for the 
\begin{proof}[Proof of Theorem~\ref{thm:picktheta}]
Recall that $\theta_0 = \tfrac{8m+1}{2k}\pi$, hence \(\tfrac{k}{2}\theta_0 = 2m\pi + \pi/4\) and thus $\cot(\tfrac{k}{2}\theta_0) = \tan(\tfrac{k}{2}\theta_0)=1$. Then, by the choice of $Q$ we see that 
\[ S(\theta_0 ) = Q -  \sin \theta_0  < Q < Q + \sin(\theta_0) = A(\theta_0).\]
Now let  
\begin{align*} \htheta_1 = \theta_0 - \frac{2}{k}\cdot \frac{\pi}{8} = \frac{16m+1}{4k}\pi && \mbox{and} && \htheta_2 = \theta_0 + \frac{2}{k}\cdot \frac{\pi}{8} = \frac{16m+3}{4k}\pi,\end{align*} and observe that the conditions on $m$ in \eqref{eq:m_cond} imply that 
\begin{align*}\frac{\pi}{2} + \frac{c}{\sqrt{k}} &\leq \htheta_1 & \mbox{and}&& \htheta_2 &\leq \frac{\pi}{2} + \frac{1}{2\sqrt{k}}.
\end{align*} Thus, the conditions of Claim~\ref{cl:F'} are satisfied on the $[\htheta_1,\htheta_2]$ interval. It follows that
\begin{align*}
    F(\htheta_1) &= F(\theta_0)+\int_{\theta_0}^{\htheta_1} F'(x)dx & \mbox{and}& & F(\htheta_2) &= F(\theta_0) + \int_{\theta_0}^{\htheta_2} F'(x)dx \\
    &\geq Q + (\theta_0-\htheta_1)\min_{\htheta_1 \leq x \leq \theta_0} |F'(x)| &&& &\leq Q - (\htheta_2-\theta_0)\min_{\theta_0 \leq x \leq \htheta_2} |F'(x)|\\
    &\geq Q + \frac{\pi}{4k} \cdot k = Q+ \frac{\pi}{4} &&& &\leq Q - \frac{\pi}{4k}\cdot k = Q - \frac{\pi}{4}. \end{align*}
Hence 
\begin{align*}
    S(\htheta_1) &= F(\htheta_1) - \sin \htheta_1 \tan \htheta_1 & \mbox{and}& & A(\htheta_2)&= F(\htheta_2) + \sin \htheta_2 \cot \htheta_2 \\
    &=F(\htheta_1) - \sin \htheta_1 / \sqrt{2}& && &= F(\htheta_2) + \sin \htheta_2/ \sqrt{2} \\
    &\geq F(\htheta_1)-(\sqrt{2}-1) && &&  \leq F(\htheta_2)+\sqrt{2}-1 \\
    &Q + \frac{\pi}{4} - (\sqrt{2}-1) >Q&& && \leq Q - \frac{\pi}{4} + (\sqrt{2}-1) < Q
\end{align*}

Thus, in summary, we see that $S(\htheta_1) > Q > S(\theta_0)$ and $A(\htheta_1) < Q < A(\theta_0)$. Since on the $[\htheta_1, \htheta_2]$ interval both $S$ and $A$ are continuous, there have to exist $\theta_1, \theta_2$ satisfying $\htheta_1 \leq \theta_1 < \theta_0 < \theta_2 \leq \htheta_2$ such that $S(\theta_1) = Q = A(\theta_2)$.

To finish, we need to lower-bound $\theta_2-\theta_1$. We can do that by showing that $F(\theta_1) - F(\theta_2)$ cannot be too small. Using the monotonicity from Claim~\ref{cl:monotone} we can estimate
\begin{align*}F(\theta_1) &= S(\theta_1) + \sin \theta_1 \tan\left(\tfrac{k}{2}\theta_1 \right) 
&\mbox{and}&& F(\theta_2) &= A(\theta_2) - \sin \theta_2 \cot\left(\tfrac{k}{2}\theta_2 \right)
\\ 
&= Q + \sin \theta_1 \tan\left(\tfrac{k}{2}\theta_1 \right) & & &  &= Q - \sin \theta_2 \cot\left(\tfrac{k}{2}\theta_2 \right)
\\ 
&\geq Q + \sin \htheta_1 \tan\left(\tfrac{k}{2}\htheta_1 \right) & & & &\leq Q - \sin \htheta_2 \cot\left(\tfrac{k}{2}\htheta_2 \right)
\\ 
&\geq Q + (\sqrt{2}-1)/2 
&&&   &\leq Q - (\sqrt{2}-1)/2.\end{align*}

So we get, using the upper bound from by Claim~\ref{cl:F'}, that
\[ \sqrt{2}-1 \leq F(\theta_1) - F(\theta_2) \leq (\theta_2 - \theta_1)\max\{|F'(x)|: \theta_1 \leq x \leq \theta_2\} \leq (\theta_2 -\theta_1) \frac{3k}{c^2},\] from which the claimed bound on $\theta_2 - \theta_1$ follows immediately.
\end{proof}

\subsection{Putting it all together}\label{sec:time}

We are ready to prove Theorem~\ref{thm:main}. Let $\ep > 0$ be given. Since $k = n-3 > C/\ep$, we can choose an $m$ such that  
\[ k + \frac{\sqrt{k \ep}}{10}+\frac{1}{2} \leq 8m+1 \leq k + \frac{\sqrt{k \ep}}{9}- \frac{1}{2},\] hence \eqref{eq:m_cond} is satisfied with $c= \sqrt{\ep}\cdot \pi/20$. Then, by Theorem~\ref{thm:picktheta} there exists $\theta_1, \theta_2$ such that 
\[ S(\theta_1) = A
(\theta_2) = Q = F\left(\tfrac{8m+1}{2k}\pi\right),\]
\[ \frac{\pi}{2} < \frac{16m+1}{4k}\pi < \theta_1 < \theta_2 < \frac{16m+3}{4k}\pi < \frac{\pi}{2}+\frac{\sqrt{\ep}}{18 \sqrt{k}}\]
 \[ 2m\pi + \frac{\pi}{8} < \frac{k}{2}\theta_1 < \frac{k}{2}\theta_2 < 2m\pi + \frac{3\pi}{8},\] and 
 \[ \theta_2 - \theta_1 \geq \frac{\sqrt{2}-1}{3k} \frac{\pi^2\ep}{400}\]
Hence both $\cos(\tfrac{k}{2}\theta_1)$ and $\sin(\tfrac{k}{2}\theta_2)$ are at least $\sin(\pi/8) = \tfrac{\sqrt{2-\sqrt{2}}}{2}$.
On the other hand, $\cos(\theta_i): (i=1,2)$ is at most $\tfrac{\pi \sqrt{\ep}}{18\sqrt{k}}$. Hence
\[ 2|\cos(\theta_1)| \leq \frac{\pi\sqrt{\ep}}{9 \sqrt{k}} < \frac{\sqrt{\ep}}{\sqrt{k+1}}\frac{\sqrt{2-\sqrt{2}}}{2} \leq \frac{\sqrt{\ep}}{\sqrt{k+1}}\cos\left(\tfrac{k}{2}\theta_1\right),\] and similarly 
\[ 2|\cos(\theta_2)| \leq \frac{\pi\sqrt{\ep}}{9 \sqrt{k}} < \frac{\sqrt{\ep}}{\sqrt{k+1}}\frac{\sqrt{2-\sqrt{2}}}{2} \leq \frac{\sqrt{\ep}}{\sqrt{k+1}}\sin\left(\tfrac{k}{2}\theta_2\right),\] so all the conditions of Corollary~\ref{cor:transfer} are fulfilled. Thus there is transfer between the endpoints of the chain with fidelity at least $1-\ep$ and in time 
\[ \frac{\pi}{2|\cos(\theta_1)-\cos(\theta_2)|} \approx \frac{\pi}{2|\theta_1 - \theta_2|} \geq \Omega(k/\ep).\]

\paragraph{Acknowledgements:} ~

Gabor Lippner was supported by the Simons Foundation Collaboration Grant \#953420.

\bibliographystyle{plain}
\bibliography{quantum}

@article{bose2003,
	author = {Bose, Sougato},
	date-modified = {2026-01-09 12:39:39 -0500},
	doi = {10.1103/physrevlett.91.207901},
	issn = {1079-7114},
	journal = {Physical Review Letters},
	month = nov,
	number = {20},
	publisher = {American Physical Society (APS)},
	title = {Quantum Communication through an Unmodulated Spin Chain},
	url = {http://dx.doi.org/10.1103/PhysRevLett.91.207901},
	volume = {91},
	year = {2003},
	bdsk-url-1 = {http://dx.doi.org/10.1103/PhysRevLett.91.207901}}

@article{QST,
	abstract = {If X is a graph with adjacency matrix A, then we define H(t) to be the operator exp(itA). We say that we have perfect state transfer in X from the vertex u to the vertex v at time τ if the uv-entry of |H(τ)u,v|=1. State transfer has been applied to key distribution in commercial cryptosystems, and it seems likely that other applications will be found. We offer a survey of some of the work on perfect state transfer and related questions. The emphasis is almost entirely on the mathematics.},
	author = {Chris Godsil},
	doi = {https://doi.org/10.1016/j.disc.2011.06.032},
	issn = {0012-365X},
	journal = {Discrete Mathematics},
	keywords = {Graph spectra, Perfect state transfer, Continuous quantum walk},
	note = {Algebraic Graph Theory --- A Volume Dedicated to Gert Sabidussi on the Occasion of His 80th Birthday},
	number = {1},
	pages = {129-147},
	title = {State transfer on graphs},
	url = {https://www.sciencedirect.com/science/article/pii/S0012365X11002974},
	volume = {312},
	year = {2012},
	bdsk-url-1 = {https://www.sciencedirect.com/science/article/pii/S0012365X11002974},
	bdsk-url-2 = {https://doi.org/10.1016/j.disc.2011.06.032}}

@article{chen2016,
	author = {Chen, Xining and Mereau, Robert and Feder, David L.},
	date-modified = {2026-01-09 12:41:07 -0500},
	doi = {10.1103/physreva.93.012343},
	issn = {2469-9934},
	journal = {Physical Review A},
	month = jan,
	number = {1},
	publisher = {American Physical Society (APS)},
	title = {Asymptotically perfect efficient quantum state transfer across uniform chains with two impurities},
	url = {http://dx.doi.org/10.1103/PhysRevA.93.012343},
	volume = {93},
	year = {2016},
	bdsk-url-1 = {http://dx.doi.org/10.1103/PhysRevA.93.012343}}

@article{stolze2012,
	author = {Linneweber, T. and Stolze, J. and Uhrig, G. S.},
	date-added = {2016-11-03 19:46:26 +0000},
	date-modified = {2016-11-07 14:52:38 +0000},
	doi = {10.1142/S0219749912500293},
	eprint = {http://www.worldscientific.com/doi/pdf/10.1142/S0219749912500293},
	journal = {International Journal of Quantum Information},
	number = {03},
	pages = {1250029},
	title = {Perfect state transfer in {XX} chains induced by boundary magnetic fields},
	url = {http://www.worldscientific.com/doi/abs/10.1142/S0219749912500293},
	volume = {10},
	year = {2012},
	bdsk-url-1 = {http://www.worldscientific.com/doi/abs/10.1142/S0219749912500293},
	bdsk-url-2 = {http://dx.doi.org/10.1142/S0219749912500293}}

@article{casaccino2009,
	author = {A. Casaccino and S. Lloyd and S. Mancini and S. Severini},
	date-modified = {2026-01-09 12:40:12 -0500},
	journal = {International Journal of Quantum Information},
	number = {8},
	pages = {1417-1427},
	title = {Quantum state transfer through a qubit network with energy shifts and fluctuations},
	volume = {7},
	year = {2009}}

@article{path,
	archiveprefix = {arXiv},
	author = {Kirkland, Stephen and van Bommel, Christopher M.},
	date-added = {2025-12-22 18:18:37 -0500},
	date-modified = {2025-12-22 18:18:37 -0500},
	doi = {10.1007/s11128-022-03558-x},
	eprint = {2112.02369},
	journal = {Quant. Inf. Proc.},
	number = {6},
	pages = {209},
	primaryclass = {quant-ph},
	title = {{State transfer on paths with weighted loops}},
	volume = {21},
	year = {2022},
	bdsk-url-1 = {https://doi.org/10.1007/s11128-022-03558-x}}

@article{christandl2005,
	author = {Christandl, Matthias and Datta, Nilanjana and Dorlas, Tony C. and Ekert, Artur and Kay, Alastair and Landahl, Andrew J.},
	date-added = {2017-02-10 03:38:31 +0000},
	date-modified = {2017-02-10 03:38:58 +0000},
	doi = {10.1103/PhysRevA.71.032312},
	issue = {3},
	journal = {Physical Review A},
	month = {Mar},
	numpages = {11},
	pages = {032312},
	publisher = {American Physical Society},
	title = {Perfect transfer of arbitrary states in quantum spin networks},
	url = {http://link.aps.org/doi/10.1103/PhysRevA.71.032312},
	volume = {71},
	year = {2005},
	bdsk-url-1 = {http://link.aps.org/doi/10.1103/PhysRevA.71.032312},
	bdsk-url-2 = {http://dx.doi.org/10.1103/PhysRevA.71.032312}}

@article{kay2010,
	author = {Kay, Alastair},
	journal = {International Journal of Quantum Information},
	number = {4},
	pages = {641},
	title = {Perfect, efficient, state transfer and its applications as a constructive tool},
	volume = {8},
	year = {2010}}

\end{document}